\newtheorem{theorem}{Theorem}[section]
\newtheorem{proposition}[theorem]{Proposition}
\newcommand{\beqa}{\begin{eqnarray*}}
\newcommand{\eeqa}{\end{eqnarray*}\par\noindent}
\renewcommand{\emph}[1]{\textbf{#1}}
\newcommand{\lrarr}{\longrightarrow}
\newcommand{\rarr}{\rightarrow}
\newcommand{\CC}{\mathcal{C}}
\newcommand{\labarrow}[1]{\stackrel{#1}{\longrightarrow}}
\newcommand{\Set}{\mathbf{Set}}
\newcommand{\IFF}{\; \Longleftrightarrow \;}
\newcommand{\AND}{\; \wedge \;}
\newcommand{\OR}{\; \vee \;}
\newcommand{\IMP}{\; \Rightarrow \;}
\newcommand{\da}{{\downarrow}}
\newcommand{\gl}{g_{\lambda}}
\newcommand{\natarrow}{\labarrow{\cdot}}
\newcommand{\UU}{\mathcal{U}}
\newcommand{\Real}{\mathbb{R}}
\newcommand{\SP}{\SSS}
\newcommand{\Pow}{\mathcal{P}}
\newcommand{\Nat}{\mathbb{N}}
\newcommand{\lsem}{\llbracket}
\newcommand{\rsem}{\rrbracket}
\renewcommand{\AA}{\mathcal{A}}
\newcommand{\lft}{\mathsf{l}}
\newcommand{\rgt}{\mathsf{r}}
\newcommand{\lbranch}{\lft}
\newcommand{\rbranch}{\rgt}
\newcommand{\FG}{F_{\mathsf{G}}}
\newcommand{\FSP}{F_{\mathsf{SP}}}
\renewcommand{\gl}{g_{\lft}}
\newcommand{\gr}{g_{\rgt}}
\renewcommand{\sl}{s_{\lft}}
\newcommand{\sr}{s_{\rgt}}
\renewcommand{\SP}{\mathcal{S}}
\newcommand{\GG}{\mathcal{G}}
\newcommand{\game}{\mathsf{game}}
\newcommand{\AsBc}{\mathsf{AsBc}}
\newcommand{\AcBs}{\mathsf{AcBs}}
\newcommand{\BcAs}{\mathsf{BcAs}}
\newcommand{\BsAc}{\mathsf{BsAc}}
\newcommand{\fwc}{f_{\da}}
\newcommand{\sn}{{\Downarrow}}
\newcommand{\fsn}{f_{\sn}}
\newcommand{\us}{\hat{s}}
\newcommand{\usl}{\hat{s}_{\lft}}
\newcommand{\usr}{\hat{s}_{\rgt}}
\newcommand{\PE}{\mathsf{PE}}
\newcommand{\SPE}{\mathsf{SPE}}
\newcommand{\WC}{\mathsf{WC}}
\newcommand{\SC}{\mathsf{SC}}
\newcommand{\ba}{\bar{\alpha}}
\newcommand{\alst}{\ba^{\ast}}
\newcommand{\dominates}{\succcurlyeq}
\title{Coalgebraic Analysis of Subgame-perfect Equilibria in Infinite Games without Discounting}
\author{Samson Abramsky\\Department of Computer Science, University of Oxford\\Viktor Winschel\\ 
Department of Economics, University of Mannheim}
\date{\today}
\begin{document}

\maketitle

\begin{abstract}
We present a novel coalgebraic formulation of infinite extensive games.
We define both the game trees and the strategy profiles 
by possibly infinite systems of corecursive equations.
Certain strategy profiles are proved to be subgame perfect equilibria using a novel proof principle of predicate coinduction.
We characterize all subgame perfect equilibria for the dollar auction game.
The economically interesting feature is that in order to prove these results
we do not need to rely on continuity assumptions on the payoffs which amount to discounting the future.
In particular, we prove a form of one-deviation principle without any such assumptions.
This suggests that coalgebra supports a more adequate treatment of infinite-horizon models in game theory and economics.
\end{abstract}
%
%
\section{Introduction}

Infinite structures turn up at many places in economic theory. 
Prominent examples are  infinite games.
The standard approach is to employ discounted dynamic programming to derive equilibria \cite{fudenberg1991game,ljungqvist_recursive_2004}.
In this paper we  take a different approach.
We shall use the methods of \emph{coalgebra} \cite{rutten2000universal,jacobs1997tutorial}
to analyze two basic examples of infinite games without resorting to discounting and induction.

We use coalgebraic methods to define infinite extensive games and  strategy profiles, and the notion of \emph{subgame perfect equilibrium}.
We then prove that certain profiles are subgame perfect equilibria.
This leads us to formulate an (apparently) novel notion of predicate coinduction,
which is shown to be sound by reducing it to Kozen's metric coinduction \cite{kozen2007applications}.
We give a complete characterization of the subgame perfect equilibria of the dollar auction game.
We also prove a version of a standard game theoretical result, the \emph{one deviation principle},
without needing future discounting assumptions.

Our work is inspired by recent work by Pierre Lescanne \cite{lescanne2011rationality,lescanne2012backward},
but it is explicitly coalgebraic, leading to a mathematically richer and more general approach.
The coalgebraic approach we present is promising for further applications to economic modeling 
and we will discuss some directions for future work in the last section.

For a coalgebraic treatment of combinatorial games, see \cite{honsell2011conway,honsell2012categories}.
\section{Games}

We shall assume familiarity with the basic elements of coalgebra \cite{rutten2000universal,jacobs1997tutorial}.

We consider possibly infinite games of perfect information in extensive form and
fix the following sets:

\begin{itemize}
\item A set $\AA$ of \emph{agents} or \emph{players}. In our examples, we confine ourselves to two-player games, with $\AA = \{ A, B \}$.
\item A set $\CC$ of \emph{choices}. We restrict our discussion to games where a player has a choice of two options, \textit{left} or \textit{right}, at every stage in the game where it is their turn to play. Thus $\CC = \{ \lft, \rgt \}$.
\item A set $\UU$ of \emph{utility functions}, which assign a utility to each player. We  take $\UU = \Real^{\AA}$.
\end{itemize}

\noindent
The set of game trees is defined to be (the carrier of) the final coalgebra $(\GG, \gamma)$ of the functor
\[ \FG : X \; \mapsto \; \UU \, + \, \AA \times X \times X . \]
 on the category of $\Set$. The game tree is a possibly infinite binary tree. 
 The binary nodes are labelled with an agent whose turn it is to play at that stage in the game.
 The nodes have the form
\[ \langle \alpha, \gl, \gr \rangle , \]
where $\alpha$ is the agent label, and $\gl$, $\gr$ are the subgames corresponding to the left and right choices respectively.
The leaf nodes are labelled with utility functions, representing the pay off for the game.

\subsection{The $0/1$ game}
We define the  utility functions to be
\[ v := [A \mapsto 0, \; B \mapsto 1], \qquad w := [A \mapsto 1, \; B \mapsto 0] . \]
The game is defined by the following pair of mutually corecursive equations:

\[ \begin{array}{rcl}
G & = & \langle A, v, H \rangle \\
H & = & \langle B, w, G \rangle .
\end{array}
\]

\noindent
We can depict the game tree as follows:
\begin{center}
\begin{tikzpicture}[scale=2.54]
\ifx\dpiclw\undefined\newdimen\dpiclw\fi
\global\def\dpicdraw{\draw[line width=\dpiclw]}
\global\def\dpicstop{;}
\dpiclw=0.8bp
\dpicdraw (0.132353,0) circle (0.052107in)\dpicstop
\draw (0.132353,0) node{$A$};
\dpicdraw (0.75,0) circle (0.052107in)\dpicstop
\draw (0.75,0) node{$B$};
\dpicdraw (1.367647,0) circle (0.052107in)\dpicstop
\draw (1.367647,0) node{$A$};
\dpicdraw (1.985294,0) circle (0.052107in)\dpicstop
\draw (1.985294,0) node{$B$};
\dpicdraw (0.264706,0)
 --(0.609563,0)\dpicstop
\dpicdraw (0.553627,0.013984)
 --(0.609563,0)
 --(0.553627,-0.013984)
\dpicstop
\draw (0.437134,0) node[above=-0.529412bp]{$\rbranch$};
\dpicdraw (0.882353,0)
 --(1.22721,0)\dpicstop
\dpicdraw (1.171275,0.013984)
 --(1.22721,0)
 --(1.171275,-0.013984)
\dpicstop
\draw (1.054781,0) node[above=-0.529412bp]{$\rbranch$};
\dpicdraw (1.5,0)
 --(1.844857,0)\dpicstop
\dpicdraw (1.788922,0.013984)
 --(1.844857,0)
 --(1.788922,-0.013984)
\dpicstop
\draw (1.672428,0) node[above=-0.529412bp]{$\rbranch$};
\dpicdraw (2.117647,0)
 --(2.462504,0)\dpicstop
\dpicdraw (2.406569,0.013984)
 --(2.462504,0)
 --(2.406569,-0.013984)
\dpicstop
\draw (2.290075,0) node[above=-0.529412bp]{$\rbranch$};
\dpicdraw (-0.352941,0)
 --(-0.008085,0)\dpicstop
\dpicdraw (-0.06402,0.013984)
 --(-0.008085,0)
 --(-0.06402,-0.013984)
\dpicstop
\dpicdraw (0,-0.661765) rectangle (0.264706,-0.485294)\dpicstop
\draw (0.132353,-0.573529) node{$v$};
\dpicdraw (0.617647,-0.661765) rectangle (0.882353,-0.485294)\dpicstop
\draw (0.75,-0.573529) node{$w$};
\dpicdraw (1.235294,-0.661765) rectangle (1.5,-0.485294)\dpicstop
\draw (1.367647,-0.573529) node{$v$};
\dpicdraw (1.852941,-0.661765) rectangle (2.117647,-0.485294)\dpicstop
\draw (1.985294,-0.573529) node{$w$};
\dpicdraw (0.132353,-0.132353)
 --(0.132353,-0.47721)\dpicstop
\dpicdraw (0.146337,-0.421275)
 --(0.132353,-0.47721)
 --(0.118369,-0.421275)
\dpicstop
\draw (0.132353,-0.304781) node[left=-0.529412bp]{$\lbranch$};
\dpicdraw (0.75,-0.132353)
 --(0.75,-0.47721)\dpicstop
\dpicdraw (0.763984,-0.421275)
 --(0.75,-0.47721)
 --(0.736016,-0.421275)
\dpicstop
\draw (0.75,-0.304781) node[left=-0.529412bp]{$\lbranch$};
\dpicdraw (1.367647,-0.132353)
 --(1.367647,-0.47721)\dpicstop
\dpicdraw (1.381631,-0.421275)
 --(1.367647,-0.47721)
 --(1.353663,-0.421275)
\dpicstop
\draw (1.367647,-0.304781) node[left=-0.529412bp]{$\lbranch$};
\dpicdraw (1.985294,-0.132353)
 --(1.985294,-0.47721)\dpicstop
\dpicdraw (1.999278,-0.421275)
 --(1.985294,-0.47721)
 --(1.97131,-0.421275)
\dpicstop
\draw (1.985294,-0.304781) node[left=-0.529412bp]{$\lbranch$};
\draw (2.647059,0) node{$\cdots$};
\end{tikzpicture}

\end{center}

\noindent
More formally, the above equations define a $\FG$-coalgebra $\alpha : \{ G, H \} \rarr \FG \{ G, H \}$ on the set $\{ G, H \}$. 
The $0/1$ game is given by $\lsem G \rsem$, the image of $G$ under the unique coalgebra morphism 
\[ \lsem \cdot \rsem : (\{ G, H \}, \alpha) \lrarr (\GG, \gamma) \]
from $(\{ G, H \}, \alpha)$ to the final coalgebra.

Note that we can read the utility function $v$ as `$A$ loses and $B$ wins', while $w$ corresponds to `$A$ wins and $B$ loses'. Thus the player who chooses to stop first loses.

\subsection{The dollar auction}

The $0/1$ game has only  finite states; it can be represented by a finite system of equations. The dollar auction game, a well-known example in game theory, is an infinite-state refinement, where the utility functions change as we move down the tree.

We fix a real number $r$, and define utility functions $v_n$, $w_n$ for each stage $n \in \Nat$:
\[ v_n := [A \mapsto -n, B \mapsto r -n], \qquad w_n := [A \mapsto r - n, B \mapsto -n] . \]
We define a game by simultaneous corecursion on the infinite set of variables $\{ G_n, H_n \mid n \in \Nat \}$:
\[ \begin{array}{rcl}
G_n & = & \langle A, v_n, H_n \rangle \\
H_n & = & \langle B, w_n, G_{n+1} \rangle .
\end{array}
\]
The dollar auction game is again given by $\lsem G_0 \rsem$, the image of $G_0$ under the unique coalgebra morphism into the final coalgebra.
We can depict the game tree as follows:
\begin{center}
\begin{tikzpicture}[scale=2.54]
\ifx\dpiclw\undefined\newdimen\dpiclw\fi
\global\def\dpicdraw{\draw[line width=\dpiclw]}
\global\def\dpicstop{;}
\dpiclw=0.8bp
\dpicdraw (0.132353,0) circle (0.052107in)\dpicstop
\draw (0.132353,0) node{$A$};
\dpicdraw (0.75,0) circle (0.052107in)\dpicstop
\draw (0.75,0) node{$B$};
\dpicdraw (1.367647,0) circle (0.052107in)\dpicstop
\draw (1.367647,0) node{$A$};
\dpicdraw (1.985294,0) circle (0.052107in)\dpicstop
\draw (1.985294,0) node{$B$};
\dpicdraw (0.264706,0)
 --(0.609563,0)\dpicstop
\dpicdraw (0.553627,0.013984)
 --(0.609563,0)
 --(0.553627,-0.013984)
\dpicstop
\draw (0.437134,0) node[above=-0.529412bp]{$\rbranch$};
\dpicdraw (0.882353,0)
 --(1.22721,0)\dpicstop
\dpicdraw (1.171275,0.013984)
 --(1.22721,0)
 --(1.171275,-0.013984)
\dpicstop
\draw (1.054781,0) node[above=-0.529412bp]{$\rbranch$};
\dpicdraw (1.5,0)
 --(1.844857,0)\dpicstop
\dpicdraw (1.788922,0.013984)
 --(1.844857,0)
 --(1.788922,-0.013984)
\dpicstop
\draw (1.672428,0) node[above=-0.529412bp]{$\rbranch$};
\dpicdraw (2.117647,0)
 --(2.462504,0)\dpicstop
\dpicdraw (2.406569,0.013984)
 --(2.462504,0)
 --(2.406569,-0.013984)
\dpicstop
\draw (2.290075,0) node[above=-0.529412bp]{$\rbranch$};
\dpicdraw (-0.352941,0)
 --(-0.008085,0)\dpicstop
\dpicdraw (-0.06402,0.013984)
 --(-0.008085,0)
 --(-0.06402,-0.013984)
\dpicstop
\dpicdraw (0,-0.661765) rectangle (0.264706,-0.485294)\dpicstop
\draw (0.132353,-0.573529) node{$v_0$};
\dpicdraw (0.617647,-0.661765) rectangle (0.882353,-0.485294)\dpicstop
\draw (0.75,-0.573529) node{$w_0$};
\dpicdraw (1.235294,-0.661765) rectangle (1.5,-0.485294)\dpicstop
\draw (1.367647,-0.573529) node{$v_1$};
\dpicdraw (1.852941,-0.661765) rectangle (2.117647,-0.485294)\dpicstop
\draw (1.985294,-0.573529) node{$w_1$};
\dpicdraw (0.132353,-0.132353)
 --(0.132353,-0.47721)\dpicstop
\dpicdraw (0.146337,-0.421275)
 --(0.132353,-0.47721)
 --(0.118369,-0.421275)
\dpicstop
\draw (0.132353,-0.304781) node[left=-0.529412bp]{$\lbranch$};
\dpicdraw (0.75,-0.132353)
 --(0.75,-0.47721)\dpicstop
\dpicdraw (0.763984,-0.421275)
 --(0.75,-0.47721)
 --(0.736016,-0.421275)
\dpicstop
\draw (0.75,-0.304781) node[left=-0.529412bp]{$\lbranch$};
\dpicdraw (1.367647,-0.132353)
 --(1.367647,-0.47721)\dpicstop
\dpicdraw (1.381631,-0.421275)
 --(1.367647,-0.47721)
 --(1.353663,-0.421275)
\dpicstop
\draw (1.367647,-0.304781) node[left=-0.529412bp]{$\lbranch$};
\dpicdraw (1.985294,-0.132353)
 --(1.985294,-0.47721)\dpicstop
\dpicdraw (1.999278,-0.421275)
 --(1.985294,-0.47721)
 --(1.97131,-0.421275)
\dpicstop
\draw (1.985294,-0.304781) node[left=-0.529412bp]{$\lbranch$};
\draw (2.647059,0) node{$\cdots$};
\end{tikzpicture}

\end{center}

The motivation behind this game is as follows \cite{shubik1971dollar}. 
The value of the asset being bid for is $r$; in the original example, $r$ is one dollar or 100 cents.
The asset goes to the highest bidder, who is left with a profit of $r - b$, where $b$ is the value of his highest bid. 
The loser must also pay the value of his highest bid, while getting nothing in return. 
The above definition describes the situation where each player raises their bid by one cent at each stage in the game. 
A player either gives up and finishes the game, conceding the auction to the other player and accepting their loss, or continues, 
hoping that the other player will give up. 
Both players have an incentive to continue playing well beyond the point where both will make a loss, in order to try to minimize their losses.

\section{Strategy profiles}
\label{spsec}

Intuitively, a strategy for a player $P$ of a game $G$ specifies a choice (left or right) for every node of $G$ at which it is $P$'s turn to move. 
A strategy profile specifies a strategy for every player. 
Following \cite{lescanne2011rationality}, we define the set of strategy profiles directly, as the final coalgebra $(\SP, \sigma)$  of the functor
\[ \FSP : X \; \mapsto \; \UU \, + \, \AA \times \CC \times X \times X . \]
There is an evident natural transformation $t : \FSP \natarrow \FG$ defined by projection, which induces a functor from the category of $\FSP$-coalgebras to the category of $\FG$-coalgebras. 
It sends a strategy profile $s \in \SP$ to the underlying game tree $\game(s) \in \GG$. 
We say that $s$ is a strategy profile for the game $G$ if $G = \game(s)$.

We now define a number of strategy profiles for the $0/1$ and dollar auction games.

\subsection{The $0/1$ game}

We define two strategy profiles for the $0/1$ game where $A$ always stops (plays left) and $B$ always continues (plays right), by the following simultaneous corecursion:
\[ \begin{array}{rcl}
\AsBc & = & \langle A, \lft, v, \BcAs \rangle \\
\BcAs & = & \langle B, \rgt, w, \AsBc \rangle
\end{array}
\]

\noindent
We can depict this strategy profile as follows:
\begin{center}
\begin{tikzpicture}[scale=2.54]
\ifx\dpiclw\undefined\newdimen\dpiclw\fi
\global\def\dpicdraw{\draw[line width=\dpiclw]}
\global\def\dpicstop{;}
\dpiclw=0.8bp
\dpicdraw (0.132353,0) circle (0.052107in)\dpicstop
\draw (0.132353,0) node{$A{:}\lft$};
\dpicdraw (0.75,0) circle (0.052107in)\dpicstop
\draw (0.75,0) node{$B{:}\rgt$};
\dpicdraw (1.367647,0) circle (0.052107in)\dpicstop
\draw (1.367647,0) node{$A{:}\lft$};
\dpicdraw (1.985294,0) circle (0.052107in)\dpicstop
\draw (1.985294,0) node{$B{:}\rgt$};
\dpicdraw (0.264706,0)
 --(0.609563,0)\dpicstop
\dpicdraw (0.553627,0.013984)
 --(0.609563,0)
 --(0.553627,-0.013984)
\dpicstop
\draw (0.437134,0) node[above=-0.529412bp]{$\rbranch$};
\dpicdraw (0.882353,0)
 --(1.22721,0)\dpicstop
\dpicdraw (1.171275,0.013984)
 --(1.22721,0)
 --(1.171275,-0.013984)
\dpicstop
\draw (1.054781,0) node[above=-0.529412bp]{$\rbranch$};
\dpicdraw (1.5,0)
 --(1.844857,0)\dpicstop
\dpicdraw (1.788922,0.013984)
 --(1.844857,0)
 --(1.788922,-0.013984)
\dpicstop
\draw (1.672428,0) node[above=-0.529412bp]{$\rbranch$};
\dpicdraw (2.117647,0)
 --(2.462504,0)\dpicstop
\dpicdraw (2.406569,0.013984)
 --(2.462504,0)
 --(2.406569,-0.013984)
\dpicstop
\draw (2.290075,0) node[above=-0.529412bp]{$\rbranch$};
\dpicdraw (-0.352941,0)
 --(-0.008085,0)\dpicstop
\dpicdraw (-0.06402,0.013984)
 --(-0.008085,0)
 --(-0.06402,-0.013984)
\dpicstop
\dpicdraw (0,-0.661765) rectangle (0.264706,-0.485294)\dpicstop
\draw (0.132353,-0.573529) node{$v$};
\dpicdraw (0.617647,-0.661765) rectangle (0.882353,-0.485294)\dpicstop
\draw (0.75,-0.573529) node{$w$};
\dpicdraw (1.235294,-0.661765) rectangle (1.5,-0.485294)\dpicstop
\draw (1.367647,-0.573529) node{$v$};
\dpicdraw (1.852941,-0.661765) rectangle (2.117647,-0.485294)\dpicstop
\draw (1.985294,-0.573529) node{$w$};
\dpicdraw (0.132353,-0.132353)
 --(0.132353,-0.47721)\dpicstop
\dpicdraw (0.146337,-0.421275)
 --(0.132353,-0.47721)
 --(0.118369,-0.421275)
\dpicstop
\draw (0.132353,-0.304781) node[left=-0.529412bp]{$\lbranch$};
\dpicdraw (0.75,-0.132353)
 --(0.75,-0.47721)\dpicstop
\dpicdraw (0.763984,-0.421275)
 --(0.75,-0.47721)
 --(0.736016,-0.421275)
\dpicstop
\draw (0.75,-0.304781) node[left=-0.529412bp]{$\lbranch$};
\dpicdraw (1.367647,-0.132353)
 --(1.367647,-0.47721)\dpicstop
\dpicdraw (1.381631,-0.421275)
 --(1.367647,-0.47721)
 --(1.353663,-0.421275)
\dpicstop
\draw (1.367647,-0.304781) node[left=-0.529412bp]{$\lbranch$};
\dpicdraw (1.985294,-0.132353)
 --(1.985294,-0.47721)\dpicstop
\dpicdraw (1.999278,-0.421275)
 --(1.985294,-0.47721)
 --(1.97131,-0.421275)
\dpicstop
\draw (1.985294,-0.304781) node[left=-0.529412bp]{$\lbranch$};
\draw (2.647059,0) node{$\cdots$};
\end{tikzpicture}

\end{center}

\noindent
Formally, the strategy profile is given by $\lsem \AsBc \rsem$, the image of the strategy profile under the morphism to the final coalgebra.
The strategy profile $\lsem \AcBs \rsem $, where $A$ always continues and $B$ always stops, is defined symmetrically.

\subsection{The dollar auction}

In a similar fashion, we define strategy profiles for the dollar auction
by simultaneous corecursion on an infinite family of variables
$\{ \AsBc_n, \BcAs_n \mid n \in \Nat \}$:
\[ \begin{array}{rcl}
\AsBc_n & = & \langle A, \lft, v_n, \BcAs_n \rangle \\
\BcAs_n & = & \langle B, \rgt, w_n, \AsBc_{n+1} \rangle .
\end{array}
\]
Formally, $\lsem \AsBc_0 \rsem$ is a strategy profile for the dollar auction game in which $A$ always stops and $B$ always continues.
The strategy profile $\lsem \AcBs_0 \rsem$ where $A$ always continues and $B$ always stops is defined symmetrically.

\section{Subgame perfect equilibrium}
\label{spesec}

We now show, following \cite{lescanne2011rationality}, how the game-theoretic notion of \emph{subgame perfect equilibrium} can be defined coalgebraically.
Firstly, we introduce two auxiliary notions.

\subsection{Weak and strong convergence}

We introduce two predicates on strategy profiles, of \emph{weak} and \emph{strong convergence}.
A strategy profile is weakly convergent if following the choices that it specifies from the root eventually leads to a leaf; it is strongly convergent if this holds in every sub-profile.

We define weak convergence by:
\[ s \da \; \IFF \; (s = U) \OR (s = \langle P, \lft, \sl, \sr \rangle \AND \sl \da) \OR (s = \langle P, \rgt, \sl, \sr \rangle \AND \sr \da) . \]
More formally, weak convergence is an element of the powerset $\Pow(\SP)$. It is defined as the least fixpoint of the monotone function
\[ \fwc : \Pow(\SP) \lrarr \Pow(\SP) \; :: \; S \; \mapsto \; \UU \, \cup \, \{ \langle P, \lft, \sl, \sr \rangle \mid \sl \in S \} \, \cup \,
\{ \langle P, \rgt, \sl, \sr \rangle \mid \sr \in S \} .
\]

Strong convergence is defined as follows:
\[ s \sn \; \IFF \; (s = U) \OR (s = \langle P, c, \sl, \sr \rangle \AND s \da \AND \sl \sn \AND \sr \sn) . \]
More formally, it is the greatest fixpoint of the monotone function $\fsn$ defined analogously to $\fwc$
\[ \fsn : \Pow(\SP) \lrarr \Pow(\SP) \; :: \; S \; \mapsto \; \UU \, \cup \, \{ \langle P, \lft, \sl, \sr \rangle \mid \sl \in S , \, \sr \in S \} .
\]
\noindent
These fixpoints exist by the Knaster-Tarski fixed point theorem \cite{knaster1928theoreme,tarski1955lattice}.

Note that weak convergence is an \emph{inductive} notion; the profile must specify a finite path from the root to a leaf. Thus it is defined as a least fixpoint. Strong convergence, by contrast, expresses a constraint on all subtrees of an infinite tree, and hence is defined \emph{coinductively}, as a greatest fixpoint.
The examples of the strategy profiles described in the previous section are all strongly convergent, as we shall prove in Section~\ref{proofssec}.

The relationship between the two notions can be nicely characterized in  terms of coalgebraic modal logic \cite{rossiger2000coalgebras,moss1999coalgebraic}.
We can define an `always' modality $\Box$ as an operator on $\Pow(\SP)$:
\[ s \models \Box \phi \;\; \equiv \;\; s \models \phi \, \AND \, s =  \langle P, c, \sl, \sr \rangle \IMP \sl \models \Box \phi \AND \sr \models \Box \phi . \]
This operator is defined coinductively as a greatest fixpoint.
This modality generalizes straighforwardly to any polynomial functor, and can in fact be defined in a much more general way in the context of coalgebraic modal logic \cite{rossiger2000coalgebras,moss1999coalgebraic}.
If we write $\WC$ for the weak convergence predicate, and $\SC$ for the strong convergence predicate, we have the following:
\begin{proposition}
\label{scprop}
$\SC = \Box \WC$.
\end{proposition}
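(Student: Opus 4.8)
The plan is to prove the two inclusions $\SC \subseteq \Box\WC$ and $\Box\WC \subseteq \SC$ separately, in each case exploiting the fact that both sides are fixpoints of monotone operators on $\Pow(\SP)$. Recall $\SC = \mathsf{gfp}(\fsn)$, and $\Box$ is a greatest fixpoint as well; $\WC = \mathsf{lfp}(\fwc)$. The key structural observation, which I would isolate as the first step, is that for any predicate $\phi$, membership of $s$ in $\Box\phi$ is equivalent to: $s \models \phi$, and whenever $s = \langle P, c, \sl, \sr\rangle$ we have $\sl \models \Box\phi$ and $\sr \models \Box\phi$. In particular, unwinding $\Box\WC$: $s \in \Box\WC$ iff $s \da$ and, if $s$ is a branching node $\langle P,c,\sl,\sr\rangle$, then $\sl \in \Box\WC$ and $\sr \in \Box\WC$. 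Compare this with the defining clause of strong convergence: $s \sn$ iff $s = U$, or $s = \langle P,c,\sl,\sr\rangle$ with $s\da$, $\sl\sn$ and $\sr\sn$. These are manifestly the ``same'' recursion modulo replacing $\sn$ by $\Box\WC$ — so the argument is essentially that $\Box\WC$ satisfies the same coinductive specification as $\SC$.

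For $\SC \subseteq \Box\WC$: since $\Box\WC$ is the greatest fixpoint of the operator $\psi \mapsto \{\, s \mid s\da \text{ and } (s = \langle P,c,\sl,\sr\rangle \IMP \sl,\sr \in \psi)\,\}$, it suffices by coinduction to show $\SC$ is a post-fixpoint of that operator, i.e. that every $s \in \SC$ satisfies $s\da$ and has both children in $\SC$. Both facts are immediate from the defining clause of $\sn$ (unfolding the $\fsn$-fixpoint once): strong convergence of $s$ directly gives $s\da$, and gives $\sl\sn$, $\sr\sn$ when $s$ branches (and when $s = U$ there is nothing more to check). Hence $\SC \subseteq \Box\WC$.

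For $\Box\WC \subseteq \SC$: since $\SC$ is the greatest fixpoint of $\fsn$, it suffices to show $\Box\WC$ is a post-fixpoint of $\fsn$, i.e. that every $s \in \Box\WC$ is either $U$ or a branching node $\langle P,c,\sl,\sr\rangle$ with $\sl, \sr \in \Box\WC$. If $s = U$ we are done. If $s = \langle P,c,\sl,\sr\rangle$, then by the $\Box$-unfolding above, $s \in \Box\WC$ gives both $\sl \in \Box\WC$ and $\sr \in \Box\WC$, which is exactly what is needed. (The side condition $s\da$ in the $\fsn$-clause is also supplied by $s \in \Box\WC \subseteq \WC$, but is not even required for the post-fixpoint check, since $\fsn$ as written only demands the children lie in $S$.) This completes the second inclusion, and hence $\SC = \Box\WC$.

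The only genuinely delicate point — and the step I would expect to need the most care — is pinning down the precise fixpoint characterization of $\Box$ and checking that the ``one-step unfolding'' I use for $\Box\WC$ is legitimate, i.e. that $\Box\phi$ really does equal $\{\, s \mid s \models \phi \text{ and } (s = \langle P,c,\sl,\sr\rangle \IMP \sl \models \Box\phi \AND \sr \models \Box\phi)\,\}$; this is just the statement that $\Box\phi$ is a fixpoint of the monotone operator defining it, which follows from Knaster--Tarski, but it should be stated explicitly since everything else is a routine fixpoint manipulation resting on it. Once that is in place, both inclusions are short post-fixpoint verifications as above.
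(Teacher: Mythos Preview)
The paper states this proposition without proof, so there is no argument to compare against. Your proof is correct: both inclusions are routine post-fixpoint verifications once the greatest-fixpoint definitions of $\SC$ and of $\Box\WC$ are unfolded one step, and you have correctly noted (and handled) the minor discrepancy between the informal clause for $\sn$, which includes the condition $s\da$, and the displayed operator $\fsn$, which omits it.
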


\subsection{Utility functions induced by strategy profiles}

A strategy profile $s$ induces a utility function $\us$.
This function is defined as follows:
\[
\us =
	\left\{
		\begin{array}{ll}
			U, 	& s = U \\
			\usl,	& s = \langle P, \lft, \sl, \sr \rangle \\
			\usr,	& s = \langle P, \rgt, \sl, \sr \rangle .
		\end{array}
	\right.
\]

\noindent
In general, this function may be partial; however, if $s$ is weakly convergent, $\us$ is always a well-defined total function in $\UU$.

\subsection{Subgame perfect equilibria}

We are now ready to define the notion of a strategy profile being a \emph{subgame perfect equilibrium}.
Firstly, we define a predicate $\PE$ on strategy profiles:
\[ \PE(s) \IFF s \sn \AND (s = \langle P, \lft, \sl, \sr \rangle \IMP \usl(P) \geq \usr(P)) \AND (s = \langle P, \rgt, \sl, \sr \rangle \IMP \usr(P) \geq \usl(P)) . \]
This predicate is defined explicitly in terms of previous notions; neither induction nor coinduction is used.
It says that the choice at the root for player $P$ results in a better payoff for $P$ than the other choice would have done. Note that this predicate implies in particular that strong convergence holds.

We now define the subgame perfect equilibrium predicate $\SPE$ on $\SP$,  coinductively as a greatest fixpoint.
\[ \SPE(s) \IFF \PE(s) \AND (s = \langle P, c, \sl, \sr \rangle \IMP \SPE(\sl) \AND \SPE(\sr)) . \]
This says that the $\PE$ predicate holds at every node of the tree and we get the following analogue of Proposition~\ref{scprop}.

\begin{proposition}
$\SPE = \Box\PE$.
\end{proposition}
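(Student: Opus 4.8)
The plan is to observe that the monotone operator whose greatest fixpoint defines $\SPE$ is \emph{verbatim} the operator whose greatest fixpoint defines $\Box\phi$, instantiated at $\phi = \PE$, so that the equality is immediate from the uniqueness of greatest fixpoints. The only preliminary point is that $\PE$, being defined outright with no (co)induction, is a genuine subset of $\SP$, so that $\Box\PE$ is well defined; this is exactly the remark following the definition of $\PE$.

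Concretely, I would make both operators explicit. Writing $P$, $\sl$, $\sr$ for the agent label and the two immediate sub-profiles of a branching profile $s = \langle P, c, \sl, \sr\rangle$, define $\Phi_{\SPE}, \Phi_{\Box} : \Pow(\SP) \lrarr \Pow(\SP)$ by
\[ \Phi_{\SPE}(X) \;=\; \{\, s \mid \PE(s) \AND (s = \langle P, c, \sl, \sr\rangle \IMP \sl \in X \AND \sr \in X)\,\}, \]
\[ \Phi_{\Box}(X) \;=\; \{\, s \mid s \models \PE \AND (s = \langle P, c, \sl, \sr\rangle \IMP \sl \in X \AND \sr \in X)\,\}. \]
Both are monotone, and by the definitions in this section $\SPE$ is the greatest fixpoint of $\Phi_{\SPE}$ while $\Box\PE$ is the greatest fixpoint of $\Phi_{\Box}$. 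Since $s \models \PE$ is by definition just $s \in \PE$, i.e. $\PE(s)$, the two operators are literally the same function on $\Pow(\SP)$, hence have the same greatest fixpoint, giving $\SPE = \Box\PE$.

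Should one prefer an argument that does not invoke syntactic identity of the operators, the same conclusion follows by a double coinduction: $\SPE$ is a post-fixpoint of $\Phi_{\Box}$ (if $s \in \SPE$, then by the fixpoint equation for $\SPE$ we have $\PE(s)$, and for branching $s$ both $\sl, \sr \in \SPE$, so $s \in \Phi_{\Box}(\SPE)$), whence $\SPE \subseteq \Box\PE$; symmetrically $\Box\PE$ is a post-fixpoint of $\Phi_{\SPE}$, whence $\Box\PE \subseteq \SPE$. There is essentially no obstacle: unlike Proposition~\ref{scprop}, where the defining equation for $\SC$ carries the extra conjunct $s \da$ and one must reconcile the base clause $s = U$ with membership in $\WC$, here the predicate $\PE$ already occurs as the node condition in the definition of $\SPE$, so the two coinductive definitions coincide on the nose. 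One may also note in passing that, since $\PE(s)$ entails $s \sn$, we get $\Box\PE \subseteq \SC = \Box\WC$ for free.
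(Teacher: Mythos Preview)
Your argument is correct and matches the paper's treatment: the paper gives no explicit proof, simply remarking that the coinductive definition of $\SPE$ ``says that the $\PE$ predicate holds at every node of the tree'' and stating the proposition as the immediate analogue of Proposition~\ref{scprop}. Your unpacking --- that the monotone operator defining $\SPE$ is literally the operator defining $\Box\phi$ instantiated at $\phi = \PE$, so the greatest fixpoints coincide --- is exactly the content of that remark made precise.
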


The next step is to show that all the strategy profiles discussed in the previous section are in fact subgame perfect equilibria. In order to do this, we will need to have an appropriate proof principle in place.
This is the topic of the next section.

\section{Predicate coinduction}

The main emphasis in coinductive proofs has been on proving equations; the main tool for this is provided by the notion of bisimulation. However, as emphasized e.g.~by Kozen \cite{kozen2007applications}, the scope of coinductive methods is broader than this. In our case, we are interested in \emph{predicates} (properties) rather than equations. In particular, we wish to show that various elements of the final coalgebra $\SP$ satisfy the $\SPE$ predicate.

We shall formulate a proof principle which is adequate to carry out these proofs. 
The principle is quite general, and applies to any $\Set$-functor $T$ which has a final coalgebra, and is equipped with a \emph{predicate lifting} \cite{jacobs2005introduction}, from which a $\Box$-modality can be defined.
Thus it applies in particular to polynomial functors
such as $\FSP$. 

Firstly, we need some notation. The final coalgebra of $T$ is denoted $(\SP, \sigma)$.
Suppose we have a $T$-coalgebra $(X, \alpha)$, which we think of as a corecursive system of equations on the set of variables $X$. We define a map
\[ \ba : \SP^X \lrarr \SP^X :: \eta \, \mapsto \, [ \, x \mapsto \sigma^{-1} \circ T \eta \circ \alpha (x) \, ] . \]
Here $\sigma^{-1} : T\SP \lrarr \SP$ is the inverse of $\sigma$, which is an isomorphism by the Lambek lemma \cite{lambek1968fixpoint}.

The following proposition follows directly by unravelling the definitions and applying the final coalgebra property:

\begin{proposition}
\label{alstprop}
The map $\ba$ has a unique fixpoint $\alst \in \SP^X$; moreover, $\alst = \lsem \cdot \rsem$, the unique coalgebra morphism from $(X, \alpha)$ to the final coalgebra.
\end{proposition}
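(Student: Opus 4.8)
The plan is to unpack both claims—existence-and-uniqueness of a fixpoint of $\ba$, and its identification with $\lsem\cdot\rsem$—from the universal property of the final coalgebra $(\SP,\sigma)$, by exhibiting a bijective correspondence between fixpoints of $\ba$ in $\SP^X$ and $T$-coalgebra morphisms $(X,\alpha)\to(\SP,\sigma)$. Since the latter set is a singleton by finality, both parts follow at once.

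First I would spell out, for an arbitrary $\eta \in \SP^X$, what it means for $\eta$ to be a $T$-coalgebra morphism from $(X,\alpha)$ to $(\SP,\sigma)$: it is the commutation of the square whose top and bottom arrows are $\alpha : X \to TX$ and $\sigma : \SP \to T\SP$, and whose verticals are $\eta$ and $T\eta$; that is, $\sigma \circ \eta = T\eta \circ \alpha$ as maps $X \to T\SP$. Since $\sigma$ is invertible (Lambek's lemma, already invoked in the excerpt), this equation is equivalent to $\eta = \sigma^{-1} \circ T\eta \circ \alpha$. Reading this equality pointwise over $x \in X$ gives exactly $\eta(x) = \sigma^{-1}(T\eta(\alpha(x)))$ for all $x$, which is precisely the statement $\eta = \ba(\eta)$ by the definition of $\ba$. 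So $\eta$ is a fixpoint of $\ba$ if and only if $\eta$ is a coalgebra morphism $(X,\alpha)\to(\SP,\sigma)$.

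Then I would invoke finality of $(\SP,\sigma)$: there is exactly one coalgebra morphism $(X,\alpha)\to(\SP,\sigma)$, and by definition this morphism is $\lsem\cdot\rsem$. Combining with the equivalence just established, $\ba$ has exactly one fixpoint in $\SP^X$, and it equals $\lsem\cdot\rsem$. That completes the proof.

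There is essentially no obstacle here—the proposition is a routine repackaging, and indeed the excerpt itself says it "follows directly by unravelling the definitions and applying the final coalgebra property." The only point requiring the slightest care is the bookkeeping that turns the pointwise definition of $\ba$ (a function defined by its action on each $x \in X$) into the single commuting-square equation $\sigma\circ\eta = T\eta\circ\alpha$, and the use of the invertibility of $\sigma$ to pass between the two forms; everything else is immediate from the respective definitions.
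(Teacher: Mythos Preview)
Your proposal is correct and is exactly the ``unravelling the definitions and applying the final coalgebra property'' that the paper invokes in lieu of a proof: you show that $\eta = \ba(\eta)$ is equivalent to $\sigma \circ \eta = T\eta \circ \alpha$ via the invertibility of $\sigma$, and then finality gives existence, uniqueness, and the identification with $\lsem\cdot\rsem$ in one stroke. There is nothing to add.
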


\noindent
Now let $\phi \subseteq \SP$ be a predicate on $\SP$. 
We can  formulate our predicate coinduction principle as follows.

\begin{center}
\begin{tabular}{|c|}\hline
$\forall \eta \in \SP^X. \, \forall x \in X. \, \exists k \geq 1. \, \phi(\ba^k(\eta)(x))$ \\ \hline
$\forall x \in X. \, \Box \phi (\lsem x \rsem)$ \\ \hline
\end{tabular}
\end{center}


We shall show the soundness of this principle in the following proposition.

\begin{proposition}
The predicate coinduction principle is sound.
\end{proposition}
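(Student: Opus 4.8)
The plan is to exploit that, by Proposition~\ref{alstprop}, $\alst = \lsem\cdot\rsem$ is the \emph{unique} fixpoint of $\ba$. The standard metric picture behind this --- and behind Kozen's metric coinduction~\cite{kozen2007applications}, which the introduction signals as the route --- is worth recalling first: $\SP$ carries its canonical complete ultrametric ($d(s,t) = 2^{-n}$ exactly when the first $n$ unfoldings of $s$ and $t$ agree), $\SP^X$ the supremum metric, and since $\sigma(\ba(\eta)(x)) = T\eta(\alpha(x))$ the $(n{+}1)$-st unfolding of $\ba(\eta)(x)$ depends on $\eta$ only through the $n$-th unfoldings of the $\eta(y)$. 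Hence $\ba$ is $\frac{1}{2}$-Lipschitz, a contraction, and $\ba^{k}(\eta) \to \alst$ for every $\eta$.

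The step that uses the hypothesis is short: I would instantiate it at $\eta := \alst$. Because $\ba(\alst) = \alst$ we have $\ba^{k}(\alst) = \alst$ for all $k$, so for each $x \in X$ there is some $k \geq 1$ with $\phi(\ba^{k}(\alst)(x)) = \phi(\alst(x)) = \phi(\lsem x \rsem)$. Hence $\phi(\lsem x \rsem)$ holds for every $x \in X$; equivalently, the image $S := \lsem\cdot\rsem[X]$ is contained in $\phi$. This is the bridge from the finitary hypothesis to the limit profiles, and --- since with $\eta = \alst$ the iterate $\ba^{k}(\alst)$ is constantly $\alst$ --- it is also where the ``$\exists k$'' and ``$\forall \eta$'' of the hypothesis collapse harmlessly.

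It then remains to promote $S \subseteq \phi$ to $S \subseteq \Box\phi$. Writing $\Box\phi = \nu g$ for the monotone operator $g(\psi) = \phi \cap \sigma^{-1}(\lambda_{\SP}(\psi))$ determined by the predicate lifting $\lambda$, it suffices to check that $S$ is a post-fixpoint of $g$. We already have $S \subseteq \phi$; and from the coalgebra-morphism equation $\sigma \circ \lsem\cdot\rsem = T\lsem\cdot\rsem \circ \alpha$, for each $x$ the element $\sigma(\lsem x \rsem) = T\lsem\cdot\rsem(\alpha(x))$ lies in the image of $T$ applied to the factorization $X \twoheadrightarrow S \hookrightarrow \SP$, hence in $\lambda_{\SP}(S)$ (immediately for the canonical predicate lifting; with routine naturality bookkeeping in general); so $S \subseteq \sigma^{-1}(\lambda_{\SP}(S))$. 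Thus $S \subseteq g(S)$, and Knaster--Tarski gives $S \subseteq \nu g = \Box\phi$, i.e.\ $\Box\phi(\lsem x \rsem)$ for all $x$ --- the conclusion. For the polynomial functor $\FSP$ this is just the observation that every subtree of $\lsem x \rsem$ is again some $\lsem y \rsem$, so $\phi$ at all of them is literally $\Box\phi$ at the root. To present the argument strictly as an application of Kozen's theorem, one would instead exhibit a closed $\ba$-invariant subset of $\SP^X$ that contains $\alst$ and witnesses $\forall x.\,\Box\phi(\lsem x\rsem)$; the content is the same.

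I expect the main obstacle to be conceptual rather than computational: recognizing that evaluating the hypothesis at the fixpoint $\eta = \alst$ is exactly the right move, and then verifying that the successors the $\Box$-modality reads off $\lsem x \rsem$ stay within the image $\lsem\cdot\rsem[X]$ --- automatic for polynomial functors such as $\FSP$, but requiring care about the chosen predicate lifting (and, in the metric packaging, about closedness of the sets fed to Kozen's theorem) in the general case.
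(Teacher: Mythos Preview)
Your proof is correct and follows essentially the same two-step route as the paper: instantiate the hypothesis at $\eta = \alst = \lsem\cdot\rsem$ to obtain $\phi(\lsem x\rsem)$ for all $x$, then use that the image of the coalgebra morphism is a subcoalgebra (an invariant) contained in $\phi$, hence contained in $\Box\phi$. The paper phrases the second step by citing that $\Box\phi$ is the largest invariant inside $\phi$, while you unfold the same fact as a Knaster--Tarski post-fixpoint argument for the operator $g(\psi) = \phi \cap \sigma^{-1}(\lambda_{\SP}(\psi))$; these are the same observation. Your metric preamble and the remarks about Kozen's theorem are additional context rather than part of the proof proper --- the paper mentions metric coinduction as motivation but does not actually invoke it in the argument.
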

\begin{proof}
We can consider the rule as derived by composing two more basic rules.
Firstly, from the premise, we claim that we can derive the following:
\[ \forall x \in X. \, \phi(\lsem x \rsem) . \]
This holds by taking $\eta = \lsem \cdot \rsem$, and noting that, by Proposition~\ref{alstprop}, $\ba^k(\lsem \cdot \rsem)(x) = \lsem x \rsem$.

Since the image of a coalgebra morphism is a sub-coalgebra \cite[Theorem 6.3]{rutten2000universal}, and hence an invariant \cite[Theorem 6.2.5]{jacobs2005introduction}, and since $\Box \phi$ is the largest invariant contained in $\phi$ \cite[Definition 6.3.1]{jacobs2005introduction}, the conclusion of the rule now follows.
\end{proof}

Simple as it is, this rule is useful since it allows us to derive invariants for elements of the final coalgebra which are defined by arbitrary systems of corecursive equations. We shall now apply it to the task of showing that the strategy profiles we have defined are subgame-perfect equilibria.

\section{Proving properties for corecursively defined strategy profiles}

\label{proofssec}

We shall now apply our predicate coinduction principle to show that the strategy profiles defined in Section~\ref{spsec} are strongly convergent and subgame perfect equilibria.

\subsection{The $0/1$-game}

We shall begin by explicitly computing the depth-three unfoldings of the corecursion variables $\AsBc$ and 
$\BcAs$. In the notation of the predicate coinduction rule, we are computing $\ba^k(\eta)(x)$ for $k=3$ and $x = \AsBc$, $x = \BcAs$. A straightforward application of the definitions yields:
\[ \AsBc^{(3)} = \langle A, \lft, v, \langle B, \rgt, w, \langle A, \lft, v, \mbox{?} \rangle \rangle \rangle , \qquad
\BcAs^{(3)} = \langle B, \rgt, w, \langle A, \lft, v, \langle B, \rgt, w, \mbox{?} \rangle \rangle \rangle . \]
Here we write {?} for $\eta(x)$, since $\eta$ is arbitrary and we have no information about this value.

A similar computation yields
\[ \AcBs^{(3)} = \langle A, \rgt, v, \langle B, \lft, w, \langle A, \rgt, v, \mbox{?} \rangle \rangle \rangle , \qquad
\BsAc^{(3)} = \langle B, \lft, w, \langle A, \rgt, v, \langle B, \lft, w, \mbox{?} \rangle \rangle \rangle . \]

\begin{proposition}
The strategy profiles $\lsem \AcBs \rsem$ and $\lsem \AsBc \rsem$ are strongly convergent.
\end{proposition}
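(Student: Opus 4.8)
The strategy is to instantiate the predicate coinduction principle with $T = \FSP$, the coalgebra $(X,\alpha)$ given by the two-variable corecursive system $X = \{\AsBc, \BcAs\}$ (resp.\ $\{\AcBs, \BsAc\}$), and the predicate $\phi = \WC$, the weak convergence predicate. Since $\SC = \Box\WC$ by Proposition~\ref{scprop}, deriving $\Box\WC(\lsem x \rsem)$ for $x$ ranging over $\{\AsBc, \BcAs\}$ is exactly the claim that $\lsem \AsBc \rsem$ and $\lsem \BcAs \rsem$ are strongly convergent; the case $\lsem \BcAs \rsem$ (the subgame profile reached from $\AsBc$) gives the statement for $\AsBc$, and the analogous system handles $\AcBs$ and $\BsAc$. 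So the whole task reduces to verifying the single premise of the rule: for every $\eta \in \SP^X$ and every $x \in X$, there is some $k \geq 1$ with $\WC(\ba^k(\eta)(x))$.

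Concretely, I would take $k = 3$ and use the depth-three unfoldings $\AsBc^{(3)}$ and $\BcAs^{(3)}$ computed immediately before the proposition. For instance $\ba^3(\eta)(\AsBc) = \langle A, \lft, v, \langle B, \rgt, w, \langle A, \lft, v, \mbox{?} \rangle \rangle \rangle$, where {?} is the arbitrary value $\eta(\AsBc)$. Now I just run the defining clauses of $\fwc$ (equivalently, the disjunctive characterization of $s \da$): the outermost node is $\langle A, \lft, \sl, \sr\rangle$ with $\sl = v \in \UU$, and $U \in \WC$ for any utility function $U$ because $\UU \subseteq \fwc(S)$ for all $S$. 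Hence the left subtree is weakly convergent, so the whole profile is, regardless of {?}. The same one-line check works for $\BcAs^{(3)}$, where again the left child of the root is a utility leaf $w$. (Here $k = 1$ would in fact already suffice, since the root's chosen subtree is immediately a leaf; $k = 3$ is just what the surrounding computation makes available.) For the $\AcBs$ system one uses $\AcBs^{(3)}$ and $\BsAc^{(3)}$ instead — note that there the \emph{right} branch is selected, and $\sr = v$ resp.\ $\sr = w$ is again a leaf, so the argument is symmetric.

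There is essentially no hard step here: the content is entirely in having set up the predicate coinduction machinery and Proposition~\ref{scprop}, after which the premise is discharged by inspection of the (already displayed) finite unfoldings together with the base clause $\UU \subseteq \fwc(S)$. If anything, the one point deserving a sentence of care is the passage from "$\WC(\ba^k(\eta)(x))$ for all $x \in X$" to the claimed strong convergence of the two \emph{specific} elements $\lsem \AsBc \rsem, \lsem \BcAs \rsem$: this is immediate because the rule's conclusion quantifies over all $x \in X$ and $\{\AsBc,\BcAs\}$ is precisely $X$, with $\lsem\cdot\rsem$ the unique morphism to the final coalgebra by Proposition~\ref{alstprop}.
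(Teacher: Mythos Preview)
Your overall approach matches the paper's: instantiate predicate coinduction with $\phi = \WC$ and invoke $\SC = \Box\WC$ from Proposition~\ref{scprop}. The paper takes $k=2$ rather than your $k=3$, but either suffices.

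Your verification of the premise, however, is wrong in two of the four cases. For $\BcAs^{(3)} = \langle B, \rgt, w, \ldots\rangle$ the root choice is $\rgt$, so the relevant clause of $\fwc$ demands $\sr\da$, not $\sl\da$; the fact that $\sl = w$ is a leaf is irrelevant. You must descend once more into $\sr = \langle A, \lft, v, \ldots\rangle$, whose left child $v$ is the leaf that actually yields $\sr\da$. Symmetrically, it is not true that in the $\AcBs/\BsAc$ system ``the right branch is selected'' for both variables: $\BsAc = \langle B, \lft, w, \ldots\rangle$ selects $\lft$ (and there $\sl = w$ is indeed a leaf, so $k=1$ works), while $\AcBs = \langle A, \rgt, v, \ldots\rangle$ selects $\rgt$ with $\sr$ \emph{not} a leaf. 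Hence your parenthetical that $k=1$ already suffices is false for $\BcAs$ and for $\AcBs$; the minimal uniform value is $k=2$, which is what the paper uses. With $k=2$ or $k=3$ and the corrected branch-following, the argument goes through exactly as you intend.
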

\begin{proof}
We apply the predicate coinduction principle, with $k=2$. Using the computations of $\AcBs^{(3)}$ and $\AsBc^{(3)}$ given above, we can compute directly that $\lsem \AcBs \rsem \da$ and $\lsem \BsAc \rsem \da$, with $\widehat{\lsem \AcBs \rsem} = \widehat{\lsem \BsAc \rsem} = w$, and $\lsem \AsBc \rsem \da$ and $\lsem \BcAs \rsem \da$, with $\widehat{\lsem \AcBs \rsem} = \widehat{\lsem \BsAc \rsem} = v$. We conclude that $\lsem \AcBs \rsem \sn$ and $\lsem \AsBc \rsem \sn$, as required.
\end{proof}

\begin{proposition}
The strategy profiles $\lsem \AcBs \rsem$ and $\lsem \AsBc \rsem$ are subgame perfect equilibria.
\end{proposition}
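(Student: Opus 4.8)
The plan is to apply the predicate coinduction principle exactly as in the preceding strong convergence proof, but now with the predicate $\phi = \PE$ and the conclusion $\Box\PE = \SPE$. Recall that the principle, applied to the $\FSP$-coalgebra on $X = \{\AsBc, \BcAs\}$ (respectively on the symmetric variables for the other profile), requires us to exhibit a single $k \geq 1$ such that for \emph{every} $\eta \in \SP^X$ and every $x \in X$, the predicate $\PE$ holds of $\ba^k(\eta)(x)$. Since $\PE(s)$ unfolds to "$s \sn$, and the root choice is payoff-optimal for the player who moves there", and since $\PE$ inspects only the root and the induced utilities of the two immediate subtrees, we need $\ba^k(\eta)(x)$ to be strongly convergent (already established, via $k=2$) and we need enough of the tree unfolded to read off the induced utility functions $\widehat{(\cdot)}$ of the two children of the root. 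From the depth-three unfoldings computed above this is clearly achievable with a small fixed $k$ — I expect $k = 3$ suffices, so that the first three levels are concrete and only the dangling "?" $= \eta(x)$ remains, which does not affect either $\sn$ (shown with $k=2$) or the root-level utility comparison.

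Concretely, for $x = \AsBc$ we have $\AsBc^{(3)} = \langle A, \lft, v, \langle B, \rgt, w, \langle A, \lft, v, ? \rangle \rangle \rangle$. The left subtree is the leaf $v$, so its induced utility is $v$ itself, giving $\widehat{\sl}(A) = v(A) = 0$. The right subtree is $\langle B, \rgt, w, \langle A, \lft, v, ?\rangle\rangle$; following the specified choices, its induced utility is $v$ (as computed in the strong convergence proof), so $\widehat{\sr}(A) = v(A) = 0$ as well. Hence the $\lft$-branch condition $\widehat{\sl}(A) \geq \widehat{\sr}(A)$ reads $0 \geq 0$, which holds. For $x = \BcAs$, we have $\BcAs^{(3)} = \langle B, \rgt, w, \langle A, \lft, v, \langle B, \rgt, w, ?\rangle\rangle\rangle$: here the left subtree is the leaf $w$ with $\widehat{\sl}(B) = w(B) = 0$, while the right subtree's induced utility is $w$, so $\widehat{\sr}(B) = w(B) = 0$, and the $\rgt$-branch condition $\widehat{\sr}(B) \geq \widehat{\sl}(B)$ again reads $0 \geq 0$. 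The same bookkeeping, with the roles of $v$ and $w$ and of $\lft$ and $\rgt$ interchanged, handles $\AcBs$ and $\BsAc$. So in every case $\PE$ holds of $\ba^k(\eta)(x)$, and predicate coinduction yields $\Box\PE(\lsem x\rsem)$, i.e. $\SPE(\lsem \AsBc\rsem)$ and $\SPE(\lsem \AcBs\rsem)$.

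The only subtlety, and the step I would be most careful about, is confirming that the induced-utility values of the two \emph{immediate} subtrees of each root really are independent of the unknown tail $\eta(x)$, so that a uniform finite $k$ works. This is exactly the content of the strong convergence argument: following the specified choices from each subtree reaches a leaf after finitely many (here at most two) steps, \emph{before} any "?" is encountered, so $\widehat{(\cdot)}$ is already determined by $\ba^k(\eta)(x)$. Thus one should either reuse the strong convergence proposition directly or re-observe that the relevant $\widehat{\sl}, \widehat{\sr}$ stabilize at depth $\leq k$. Once that is noted, the payoff comparisons are the trivial inequalities $0 \geq 0$ (for player $A$ choosing $\lft$ vs.\ vs.\ $A$-payoff $0$ on the right in $\AsBc$) and symmetrically for $B$, so there is no real calculation left. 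Since every node of $\lsem\AsBc\rsem$ and $\lsem\AcBs\rsem$ is (up to the two players' symmetry) of one of these two forms, $\SPE$ follows globally.

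\begin{proof}
We apply the predicate coinduction principle with $\phi = \PE$ and $k = 3$, using the unfoldings $\AsBc^{(3)}, \BcAs^{(3)}, \AcBs^{(3)}, \BsAc^{(3)}$ computed above. For $x = \AsBc$: the left child of the root is the leaf $v$, and the right child is $\langle B, \rgt, w, \langle A, \lft, v, ?\rangle\rangle$, whose induced utility is $v$ by the computation in the proof of strong convergence; hence $\widehat{\sl}(A) = \widehat{\sr}(A) = v(A) = 0$, and the $\lft$-clause of $\PE$, namely $\widehat{\sl}(A) \geq \widehat{\sr}(A)$, holds. For $x = \BcAs$: the left child is the leaf $w$ and the right child has induced utility $w$, so $\widehat{\sl}(B) = \widehat{\sr}(B) = w(B) = 0$ and the $\rgt$-clause $\widehat{\sr}(B) \geq \widehat{\sl}(B)$ holds. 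In both cases strong convergence of $\ba^3(\eta)(x)$ holds (it already held for $k=2$), so $\PE(\ba^3(\eta)(x))$ holds for every $\eta$. By predicate coinduction, $\Box\PE(\lsem \AsBc\rsem)$, i.e. $\SPE(\lsem \AsBc\rsem)$. The argument for $\lsem \AcBs\rsem$ is identical with $v, w$ and $\lft, \rgt$ interchanged.
\end{proof}
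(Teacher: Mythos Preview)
Your approach coincides with the paper's: predicate coinduction with $\phi = \PE$ and $k = 3$, verifying the root-level inequality at each corecursion variable from the depth-three unfolding. There is one computational slip. For $x = \BcAs$ the right child of the root is $\langle A, \lft, v, \ldots \rangle$; following $A$'s choice $\lft$ lands at the leaf $v$, so $\widehat{s_{\rgt}} = v$, not $w$. The $\rgt$-clause of $\PE$ therefore reads $v(B) = 1 \geq 0 = w(B)$ --- exactly the paper's second displayed inequality --- rather than your $0 \geq 0$. The conclusion survives here since both inequalities happen to be true, but getting this right matters: in the dollar-auction analogue the same comparison becomes $v_{n+1}(B) = r - (n{+}1) \geq -n = w_n(B)$, and that is precisely where the hypothesis $r \geq 1$ enters.

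A smaller point: the claim ``strong convergence of $\ba^3(\eta)(x)$ holds (it already held for $k=2$)'' is not literally true for arbitrary $\eta$, since $\sn$ inspects the entire tree and the dangling $\eta$-value may fail it. What the previous proposition actually gives is $\lsem x \rsem \sn$; combined with the payoff comparison --- which, as you correctly note, depends only on the first three levels --- this yields $\PE(\lsem x \rsem)$, and then the invariant step gives $\Box\PE$. The paper is equally brief on this point, so you are in good company, but your parenthetical justification misstates what was established.
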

\begin{proof}
We apply the predicate coinduction principle, with $k=3$. For $\AsBc$, we must verify that  $\AsBc^{(3)}$ and $\BcAs^{(3)}$ are $\PE$.
Using the computation of $\AsBc^{(3)}$ and  $\BcAs^{(3)}$ given above,  this reduces   to verifying the inequalities
\[ v(A) = 0 \geq 0 = v(A), \quad v(B) = 1 \geq 0 = w(B) . \]
The verification that $\lsem\AcBs \rsem$ is $\SPE$ is similar.
\end{proof}

\subsection{The Dollar Auction}

The analysis for the dollar auction will proceed along very similar lines to the $0/1$-game.

We begin by computing the depth-three unfoldings of the corecursion variables $\AsBc_n$ and 
$\BcAs_n$ for all $n \geq 0$.
\[ \AsBc_{n}^{(3)} = \langle A, \lft, v_n, \langle B, \rgt, w_n, \langle A, \lft, v_{n+1}, \mbox{?} \rangle \rangle \rangle , \qquad
\BcAs^{(3)} = \langle B, \rgt, w_n, \langle A, \lft, v_{n+1}, \langle B, \rgt, w_{n+1}, \mbox{?} \rangle \rangle \rangle . \]
A similar computation yields
\[ \AcBs_{n}^{(3)} = \langle A, \rgt, v_n, \langle B, \lft, w_n, \langle A, \rgt, v_{n+1}, \mbox{?} \rangle \rangle \rangle , \qquad
\BsAc^{(3)} = \langle B, \lft, w_n, \langle A, \rgt, v_{n+1}, \langle B, \lft, w_{n+1}, \mbox{?} \rangle \rangle \rangle . \]

The following result ca now be proved using predicate coinduction, with $k=2$, just as for the $0/1$ game.

\begin{proposition}
The strategy profiles $\lsem \AcBs_0 \rsem$ and $\lsem \AsBc_0 \rsem$ are strongly convergent.
\end{proposition}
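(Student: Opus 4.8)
The plan is to apply the predicate coinduction principle, exactly as in the $0/1$ game, taking the lifted predicate to be weak convergence $\WC$ and the unfolding depth $k = 2$. Since $\SC = \Box\WC$ by Proposition~\ref{scprop}, it suffices to establish $\Box\WC(\lsem \AsBc_0 \rsem)$ and $\Box\WC(\lsem \AcBs_0 \rsem)$; and by predicate coinduction, applied to the $\FSP$-coalgebra on $X = \{ \AsBc_n, \BcAs_n \mid n \in \Nat \}$ (and its symmetric counterpart on $\{ \AcBs_n, \BsAc_n \mid n \in \Nat \}$), it is enough to check that $\ba^2(\eta)(x) \in \WC$ for every $\eta \in \SP^X$ and every variable $x \in X$.

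The next step is to read off the two-step unfoldings from the corecursion equations; they are already visible in the depth-three unfoldings displayed above:
\[ \ba^2(\eta)(\AsBc_n) = \langle A, \lft, v_n, \langle B, \rgt, w_n, \eta(\AsBc_{n+1}) \rangle \rangle, \qquad \ba^2(\eta)(\BcAs_n) = \langle B, \rgt, w_n, \langle A, \lft, v_{n+1}, \eta(\BcAs_{n+1}) \rangle \rangle . \]
The first is weakly convergent because the root choice is $\lft$ and the left subtree is the leaf $v_n$; the second is weakly convergent because the root choice is $\rgt$ and the right subtree $\langle A, \lft, v_{n+1}, \eta(\BcAs_{n+1}) \rangle$ is itself weakly convergent, having root choice $\lft$ and left subtree the leaf $v_{n+1}$. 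The symmetric profiles $\AcBs_n$, $\BsAc_n$ are handled identically, with $\lft/\rgt$ and $v/w$ interchanged. This verifies the premise of the rule uniformly with $k = 2$, so predicate coinduction gives $\Box\WC(\lsem x \rsem)$ for every $x \in X$, in particular for $x = \AsBc_0$ and $x = \AcBs_0$; Proposition~\ref{scprop} then converts $\Box\WC$ into $\SC$, which is the claim.

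I expect no genuine obstacle here. The only point to get right --- and the reason $k = 1$ does not suffice --- is that $\eta$ is arbitrary, so the unfolding must reach a leaf along the path dictated by the specified choices \emph{without} using any information about $\eta(\cdot)$; for the variable $\BcAs_n$ this requires one further unfolding step, past $\AsBc_{n+1}$, in order to expose the ``stop'' leaf $v_{n+1}$, which is exactly the phenomenon already seen in the $0/1$ case. Notably, the real parameter $r$ never enters the argument, since weak convergence depends only on the shape of the tree and the location of its leaves --- this is precisely the place where a classical argument would appeal to discounting hypotheses, which are avoided here.
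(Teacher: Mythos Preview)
Your proposal is correct and follows exactly the approach the paper intends: the paper's own proof is merely the one-line remark that the result ``can now be proved using predicate coinduction, with $k=2$, just as for the $0/1$ game,'' and your write-up spells this out precisely, with the explicit two-step unfoldings and the verification of $\WC$ for each variable in $X$. If anything, your version is more careful than the paper's, which in the $0/1$ case slips into speaking of $\lsem \AcBs \rsem \da$ rather than $\ba^{2}(\eta)(\AcBs) \da$; your formulation keeps the arbitrary $\eta$ visible throughout, as the rule requires.
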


We recall the parameter $r$ used to define the utility functions $v_n$, $w_n$.

\begin{proposition}
If $r \geq 1$, the strategy profiles $\lsem \AcBs_n \rsem$ and $\lsem \AsBc_n \rsem$ are subgame perfect equilibria for all $n$.
\end{proposition}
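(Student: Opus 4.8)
The plan is to apply the predicate coinduction principle with $k=3$, exactly mirroring the proof for the $0/1$ game but now keeping track of the stage-dependent utility functions $v_n$, $w_n$. By Proposition~\ref{scprop} and the preceding proposition establishing strong convergence, it suffices to show $\Box\PE(\lsem \AsBc_0 \rsem)$ and $\Box\PE(\lsem \AcBs_0 \rsem)$, and since $\SPE = \Box\PE$, this is what we want. The coalgebra here has carrier $X = \{ \AsBc_n, \BcAs_n, \AcBs_n, \BsAc_n \mid n \in \Nat \}$; the predicate coinduction premise requires, for every $\eta \in \SP^X$ and every variable $x \in X$, that $\phi = \PE$ holds of $\ba^3(\eta)(x)$. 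Using the depth-three unfoldings displayed above, $\ba^3(\eta)(\AsBc_n) = \langle A, \lft, v_n, \langle B, \rgt, w_n, \langle A, \lft, v_{n+1}, \mbox{?} \rangle \rangle \rangle$, and similarly for the other three families, so the value of $\eta$ at the leaves is irrelevant to computing the induced utility at depth three.

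The key computation is the induced utility functions. For $\ba^3(\eta)(\AsBc_n)$, player $A$ plays $\lft$ at the root, so the induced utility is $v_n$, and $\PE$ requires the $A$-inequality $v_n(A) \geq \usr(A)$ where $\usr$ is the induced utility of the right subtree $\langle B, \rgt, w_n, \langle A, \lft, v_{n+1}, \mbox{?}\rangle\rangle$; this right subtree has $B$ playing $\rgt$, hence induced utility $\widehat{\langle A, \lft, v_{n+1}, \mbox{?}\rangle} = v_{n+1}$. So the inequality to verify is $v_n(A) = -n \geq -(n+1) = v_{n+1}(A)$, which always holds. The remaining inequalities, obtained from the $B$-node one level down and from the analogous unfoldings of $\BcAs_n$, $\AcBs_n$, $\BsAc_n$, reduce to comparisons of the form $w_n(B) = -n \geq r - (n+1) = v_{n+1}(B)$ (at a $B$-node that plays $\rgt$ when the alternative left payoff is $v$), i.e. $-n \geq r - n - 1$, equivalently $r \leq 1$ --- and comparisons like $v_n(B) = r - n \geq -n = w_n(B)$ forced at nodes where the profile's choice is $\lft$ with alternative $w$, which need $r \geq 0$. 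Here one must be careful to collect exactly which inequalities arise from which of the four corecursion variables; the profiles where a player stops (plays $\lft$) at a $v$-leaf want that player's $v$-payoff to beat the $w$-payoff, giving $r - n \geq -n$, while profiles where a player continues past a $w$-leaf want $-n \geq$ the deeper $v$-payoff of the opponent's-turn subtree.

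The main obstacle --- and the reason the hypothesis $r \geq 1$ appears --- is getting the direction of these inequalities consistent. Revisiting: in $\AsBc_n$, at the inner $B$-node $\langle B, \rgt, w_n, \langle A, \lft, v_{n+1}, \mbox{?}\rangle\rangle$ the profile has $B$ continue (play $\rgt$), so $\PE$ wants $B$'s payoff from continuing, namely $v_{n+1}(B) = r - (n+1)$, to be at least $B$'s payoff from stopping, namely $w_n(B) = -n$; that is $r - n - 1 \geq -n$, i.e. $r \geq 1$. Symmetrically in $\AcBs_n$, $A$ continues at an $A$-node and needs $w_{n+1}(A) = r - (n+1) \geq v_n(A) = -n$, again $r \geq 1$. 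The stopping-player inequalities ($v_n(A) \geq v_{n+1}(A)$, i.e. $-n \geq -n-1$, and the mirror) hold unconditionally. So the hypothesis $r \geq 1$ is exactly what makes all finitely many (uniform in $n$) inequalities hold, the premise of predicate coinduction is discharged with $k=3$, and we conclude $\SPE(\lsem \AsBc_n \rsem)$ and $\SPE(\lsem \AcBs_n \rsem)$ for all $n$ --- noting that running the argument starting from variable $\AsBc_n$ rather than $\AsBc_0$ gives the result for every index $n$, since the coalgebra and the premise are insensitive to the starting point. The only genuine care needed is the bookkeeping of which of the four inequality patterns attaches to each variable; everything else is the routine arithmetic just indicated.
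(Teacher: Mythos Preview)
Your proposal is correct and follows essentially the same route as the paper: apply the predicate coinduction principle with $k=3$, read off from the depth-three unfoldings the two inequalities $v_n(A) = -n \geq -(n+1) = v_{n+1}(A)$ (unconditional) and $v_{n+1}(B) = r-(n+1) \geq -n = w_n(B)$ (equivalent to $r \geq 1$), and invoke symmetry for $\AcBs_n$. Your second paragraph momentarily reverses the direction of the $B$-inequality (giving $r \leq 1$), but you catch and correct this in the third paragraph, so the final argument is the paper's argument; the only difference is that your write-up is more discursive and records the false start.
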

\begin{proof}
We apply the predicate coinduction principle, with $k=3$. For $\AsBc_n$, we must verify that  $\AsBc_n^{(3)}$ and $\BcAs_n^{(3)}$ are $\PE$ for all $n$.
Using the computations of $\AsBc_n^{(3)}$ and  $\BcAs_n^{(3)}$ given above,  this reduces   to verifying the inequalities
\[ v_n(A) = -n \geq -(n+1) = v_{n+1}(A), \quad  v_{n+1}(B) = r- (n+1) \geq -n = w_n(B) . \]
The latter inequalities are satisfied if and only if $r \geq 1$.
\end{proof}
\section{The one-deviation principle}
We now verify an important property of subgame perfect equilibria: a strategy profile is $\SPE$ if and only if it dominates any profile which differs from it in exactly one choice. In the standard game theoretical literature \cite{fudenberg_subgame-perfect_1983}, this is proved for infinite games only under strong continuity assumptions on the payoffs, which amount to discounting beyond some finite horizon, thus allowing reduction to standard backwards induction reasoning. We need no such assumptions.

Given a strategy profile $s$ of the form $\langle P, c, \sl, \sr \rangle$, we say that a profile $t$ for the same game is a one-deviation from $s$ if $t$ has one of the following forms:
\begin{itemize}
\item $\langle P, c', \sl, \sr \rangle$, where $c' \neq c$.
\item $\langle P, c, \sl', \sr \rangle$, where $\sl'$ is a one-deviation from $\sl$.
\item $\langle P, c, \sl, \sr' \rangle$, where $\sr'$ is a one-deviation from $\sr$.
\end{itemize}
This is an inductive definition.
Given a strategy profile $s$ we define  a relation $s \dominates t$, where $t$ is a one-deviation of $s$, inductively as follows:
\begin{itemize}
\item If $s = \langle P, \lft, \sl, \sr \rangle$ and $t = \langle P, \rgt, \sl, \sr \rangle$, then $s \dominates t$ iff $\usl(P) \geq \usr(P))$.
\item If $s = \langle P, \rgt, \sl, \sr \rangle$ and $t = \langle P, \lft, \sl, \sr \rangle$, then $s \dominates t$ iff $\usr(P) \geq \usl(P))$.
\item If $s = \langle P, c, \sl, \sr \rangle$ and $t = \langle P, c, \sl', \sr \rangle$, then $s \dominates t$ iff $\sl \dominates \sl'$.
\item If $s = \langle P, c, \sl, \sr \rangle$ and $t = \langle P, c, \sl, \sr' \rangle$, then $s \dominates t$ iff $\sr \dominates \sr'$.
\end{itemize}

\begin{proposition}[The one-deviation principle]
A strongly convergent strategy profile $s$ is $\SPE$ if and only if for every one-deviation $t$, $s \dominates t$.
\end{proposition}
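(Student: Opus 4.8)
The plan is to prove the two implications separately; the forward direction is routine and the converse is where the work lies. For the \emph{forward} direction, assume $\SPE(s)$. Since $\SPE = \Box\PE$, strong convergence of $s$ is automatic, and $\SPE$ is inherited by immediate (hence all) subprofiles from the coinductive defining equation of $\Box$. I would then show $s \dominates t$ for every one-deviation $t$ by structural induction on the inductive definition of ``$t$ is a one-deviation of $s$''. If $t$ deviates at the root, the inequality defining $s \dominates t$ is exactly the corresponding clause of $\PE(s)$, which holds because $\SPE(s) \IMP \PE(s)$. If $t$ deviates in a subtree, say $s = \langle P, c, \sl, \sr \rangle$ and $t = \langle P, c, \sl', \sr \rangle$ with $\sl'$ a one-deviation of $\sl$, then $\SPE(\sl)$ holds and the induction hypothesis gives $\sl \dominates \sl'$, which is the matching clause of $\dominates$; the right-subtree case is symmetric.

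For the \emph{converse}, assume $s$ is strongly convergent and $s \dominates t$ for every one-deviation $t$. Set
\[ Z \; := \; \{\, s' \in \SP \mid s' \sn \AND s' \dominates t \mbox{ for every one-deviation } t \mbox{ of } s' \,\} . \]
Then $s \in Z$ by hypothesis, so it suffices to prove $Z \subseteq \SPE$. Recalling that $\SPE = \Box\PE$ is the greatest fixpoint of the monotone operator
\[ \Phi(S) \; = \; \{\, s' \mid \PE(s') \AND (s' = \langle P, c, \sl, \sr\rangle \IMP \sl \in S \AND \sr \in S)\,\} , \]
it is enough to check that $Z$ is a post-fixpoint, i.e. $Z \subseteq \Phi(Z)$. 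So take $s' \in Z$. The case $s' = U$ is trivial, so suppose $s' = \langle P, c, \sl, \sr \rangle$. Strong convergence of $s'$ gives $\sl \sn$ and $\sr \sn$ directly from the definition of $\sn$, and guarantees $\usl$ and $\usr$ are total. For $\PE(s')$: ``deviating at the root'' of $s'$ is itself a one-deviation of $s'$, so $s' \in Z$ forces the optimality inequality that $\PE(s')$ requires (together with $s' \sn$). For $\sl \in Z$ (and, symmetrically, $\sr \in Z$): any one-deviation $\sl'$ of $\sl$ yields a one-deviation $\langle P, c, \sl', \sr \rangle$ of $s'$, whence $s' \dominates \langle P, c, \sl', \sr\rangle$, and the subtree clause of $\dominates$ gives $\sl \dominates \sl'$. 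Hence $s' \in \Phi(Z)$, and coinduction yields $Z \subseteq \SPE$, so in particular $\SPE(s)$.

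The key observation, used twice in the converse, is that ``deviating at the root of a subprofile $s'$'' is a legitimate one-deviation of $s'$: this is what lets the global domination hypothesis supply exactly the per-node optimality data that the coinductive characterization $\SPE = \Box\PE$ consumes. Strong convergence is genuinely needed in the converse — it ensures the induced utilities occurring in $\PE$ and in the base clauses of $\dominates$ are total, and it passes to subprofiles for free from the definition of $\sn$. I expect no real obstacle beyond lining up the inductive definitions of \emph{one-deviation} and of $\dominates$ with the clauses of $\PE$ and $\Phi$; that matching is the only place care is required.
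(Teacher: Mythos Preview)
Your proof is correct and rests on the same underlying observation as the paper's: there is a bijection between one-deviations of $s$ and nodes of $s$, under which $s \dominates t$ corresponds precisely to $\PE$ holding at the subprofile rooted at the deviation point. The paper exploits this directly and informally: for the forward direction it argues by contrapositive that $\neg(s \dominates t)$ witnesses a subprofile failing $\PE$, hence $\SPE = \Box\PE$ fails; for the converse it simply observes that if $s \dominates t$ for every one-deviation then every subprofile is $\PE$, whence $\Box\PE$ holds.

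Your treatment differs in execution rather than in idea. For the forward direction you replace the paper's one-line contrapositive with an explicit structural induction on the inductive definition of one-deviation; for the converse you replace the paper's appeal to the semantic reading of $\Box\PE$ (``$\PE$ at every node'') with a genuine coinductive argument, exhibiting the set $Z$ as a post-fixpoint of the defining operator $\Phi$ for $\SPE$. This buys you a proof that is more faithful to the coalgebraic machinery set up elsewhere in the paper, and in particular makes precise why strong convergence is needed (it is part of $\PE$ and must propagate to subprofiles). The paper's version is terser but relies on the reader accepting the node-level description of $\Box$ without further justification.
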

\begin{proof}
Firstly, note that if  $\neg (s \dominates t)$ for some one-deviation $t$, this means that some subprofile of $s$ does not satisfy  $\PE$, and since $\SPE = \Box \PE$, this implies that $s$ does not satisfy $\SPE$.
For the converse, note that if for some one-deviation $t$, $s \dominates t$, this implies that the sub-profile of $s$ whose root is at the node where $t$ differs from $s$ satisfies $\PE$. If this holds for all one-deviations, then all sub-profiles of $s$ satisfy $\PE$, and hence $s$ satisfies $\SPE$.
\end{proof}
\section{Complete Characterization of SPE for the Dollar Auction}
We can give a \emph{complete characterization} of the subgame perfect equilibria for the dollar auction game.
\begin{theorem}
A strategy profile for the dollar auction game is SPE if and only if it is of the form 
\textit{$\AsBc$ or $A$ always stops and $B$ always continues}
or symmetrically
\textit{$\AcBs$ or $B$ always stops  and $A$ always continues}.
\end{theorem}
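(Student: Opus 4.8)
The plan is to prove both directions. The ``if'' direction is already essentially done: the profiles $\lsem\AsBc_0\rsem$ and $\lsem\AcBs_0\rsem$ were shown to be SPE (for $r \geq 1$) in the previous section, and every subprofile of $\lsem\AsBc_0\rsem$ is itself of the form $\lsem\AsBc_n\rsem$ or $\lsem\BcAs_n\rsem$, so there is nothing new to check. The substance is the ``only if'' direction: if $s$ is a strategy profile for the dollar auction and $s$ is SPE, then $s = \lsem\AsBc_0\rsem$ or $s = \lsem\AcBs_0\rsem$.

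For the converse I would argue as follows. Suppose $s$ is SPE; in particular $s \sn$, so $s$ is strongly convergent. First I record a structural fact about the game tree: a strategy profile $s$ for the dollar auction game is uniquely determined by the sequence of choices it makes along the (unique) infinite ``main spine'' of the tree, namely the path $G_0 \to H_0 \to G_1 \to H_1 \to \cdots$ obtained by always taking the \rgt\ branch at $A$-nodes and the \rgt\ branch at $B$-nodes is \emph{not} the spine --- rather the tree only branches into a leaf on one side and a subgame on the other, so the ``spine'' is the path always descending into the $G_{n+1}$/$H_n$ children. Concretely, at $G_n$ player $A$ chooses either \lft\ (go to leaf $v_n$) or \rgt\ (go to $H_n$), and at $H_n$ player $B$ chooses either \lft\ (go to leaf $w_n$) or \rgt\ (go to $G_{n+1}$); so a profile is exactly a pair of functions $a : \Nat \to \CC$ (the choice of $A$ at each $G_n$) and $b : \Nat \to \CC$ (the choice of $B$ at each $H_n$). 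Strong convergence forces: it is \emph{not} the case that $a(n) = \rgt$ for all $n$ and $b(n) = \rgt$ for all $n$ --- more precisely, since every subprofile must be weakly convergent, and the subprofile rooted at $G_n$ descends the spine until it hits the first index $m \geq n$ with $a(m) = \lft$ or the first index $m \geq n$ with $b(m) = \lft$, strong convergence is equivalent to: for every $n$, there exists $m \geq n$ with $a(m) = \lft$ or $b(m) = \lft$. Now I would run the $\PE$ conditions at each spine node. At $G_n$: if $a(n) = \lft$ then $\PE$ demands $v_n(A) = -n \geq \widehat{H_n\text{-subprofile}}(A)$; if $a(n) = \rgt$ it demands $\widehat{H_n\text{-subprofile}}(A) \geq -n = v_n(A)$. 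Similarly at $H_n$ for $B$, comparing $w_n(B) = -n$ against the payoff $\widehat{G_{n+1}\text{-subprofile}}(B)$. The key computational step is: if the spine from $G_n$ first leaves via $A$ playing \lft\ at index $m$, the induced utility of the $G_n$-subprofile is $v_m$; if it first leaves via $B$ playing \lft\ at index $m$, the induced utility is $w_m$. Feeding this into the inequalities, I expect to derive that $a$ and $b$ cannot ``mix'': if $A$ ever stops at some $G_n$ while $B$ continues forever afterwards, we get one profile; if $B$ ever stops, we get the other; and a profile in which both $A$ stops somewhere and $B$ stops somewhere (or either stops ``too early'' relative to the other) violates some $\PE$ inequality. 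Pushing these inequalities through for all $n$ pins down $a \equiv \lft, b \equiv \rgt$ or $a \equiv \rgt, b \equiv \lft$, which are exactly $\lsem\AsBc_0\rsem$ and $\lsem\AcBs_0\rsem$.

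The main obstacle is the bookkeeping in this case analysis: one must carefully track, for each subprofile rooted at a spine node, \emph{where} the spine first exits to a leaf, translate that into the induced utility $\us$ via the recursive definition of $\us$, and check that the $\PE$ inequality at every \emph{earlier} spine node is consistent only for the two claimed profiles. In particular one must rule out ``$A$ stops at $G_n$ but $B$ would have stopped earlier'' type configurations and show the only globally consistent solutions are the two homogeneous ones; this is where the parameter constraint $r \geq 1$ (or its failure) re-enters, since e.g. the inequality $v_{n+1}(B) = r-(n+1) \geq -n = w_n(B)$ is precisely $r \geq 1$. I would organize the argument by first showing ``$A$ plays \lft\ at $G_0$ or $A$ plays \rgt\ at $G_0$'' and in each case propagating the forced choices down the spine by an induction on $n$, using the $\PE$ inequality at $G_n$ and $H_n$ together with strong convergence as the inductive lever. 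A clean way to present it: let $n_0$ be the least index (if any) at which the profile ``deviates'' from the pattern $\AsBc$, derive a contradiction with $\PE$ at a nearby node, and symmetrically for $\AcBs$.

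I should also flag that the theorem statement as written does not carry the hypothesis $r \geq 1$, so in the writeup I would either add that hypothesis explicitly or note that when $r < 1$ there are \emph{no} SPE (the $\PE$ inequality $r \geq 1$ fails at every $B$-node along the spine for $\AsBc$-type profiles, and symmetrically for $\AcBs$-type profiles, so the ``only if'' direction makes the set of SPE empty, vacuously matching a correctly-qualified statement). The honest version of the theorem is: for $r \geq 1$, the SPE are exactly $\lsem\AsBc_0\rsem$ and $\lsem\AcBs_0\rsem$; for $r < 1$ there are none.
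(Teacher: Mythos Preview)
Your proposal is correct and would yield a complete proof, but it is organised quite differently from the paper's argument, and it is worth seeing both.

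The paper does not parametrise profiles by choice functions $a,b:\Nat\to\CC$ and does not march down the spine checking inequalities node by node. Instead it runs a short global case analysis: (i) if from some point both players always continue, strong convergence fails; (ii) if both players stop at some node, then there is a node $\nu$ at which $\alpha$ stops while the next stop along the spine is by the other player, and the paper claims $\PE$ fails there because $\alpha$ would do better to continue; hence in any $\SPE$ one player always continues and the other stops infinitely often; (iii) finally, the stopping player must stop at \emph{every} one of her nodes, since continuing only postpones her own (strictly worse) later stop. That is the whole proof.

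Your route --- encode a profile as $(a,b)$, compute $\us$ for each spine subprofile as $v_m$ or $w_m$ according to where the spine first exits, and then propagate the $\PE$ inequalities by induction on $n$ --- is more explicit and more laborious, but it buys you two things. First, it makes precise exactly which inequality forces each choice, so the role of the hypothesis $r\geq 1$ becomes visible at the point where it is actually used (your remark that $v_{n+1}(B)\geq w_n(B)$ is exactly $r\geq 1$ is the right observation, and the paper's statement indeed omits this hypothesis). Second, your bookkeeping would locate the $\PE$ failure in step (ii) more carefully than the paper does: as written, the paper asserts that $\PE$ fails at the stop node $\nu$ itself, but in fact the failure can occur instead at an earlier \emph{continue} node for the player who is the next stopper (continuing there leads to that player's own later stop, which is strictly worse than stopping immediately). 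Your inequality-chasing framework handles this cleanly, whereas the paper's one-line justification is a little loose.

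In short: the paper's proof is the compressed conceptual version of the same phenomenon you are tracking, and your approach is the honest expansion. Either is acceptable; yours is longer but more robust, and your flag about $r\geq 1$ is well taken.
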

\begin{proof}
Firstly, note that if both players always continue from some point in the game, the profile will not be strongly convergent. Now suppose that both players choose to stop at some nodes of the game. Then there must be some node $\nu$ where player $\alpha$ chooses to stop, such that the next player choosing to stop is the other player. But then $\nu$ is not in $\PE$, since $\alpha$ could improve his payoff from that node by choosing to continue. Thus one player must always continue in any $\SPE$ profile, while the other player $\alpha$ must stop infinitely often.
Finally, $\alpha$ must in fact always choose to stop, since otherwise he could improve his payoff from any node where he chooses to continue.
\end{proof}

This analysis applies to any game sharing the following features of the dollar auction game:
\begin{enumerate}
\item At any point of the game, it is always better for a given player if the other player stops first.
\item At any point of the game, it is (strictly) better for the player who is the first to stop from that point to stop immediately rather than later.
\end{enumerate}

As a final remark on the dollar auction game, we note that Lescanne proposes to use this form of analysis to explain the \emph{rationality of infinite escalation} \cite{lescanne2011rationality}. If we don't know which strategy the other player is following, we always have an incentive to continue!
However, by our characterization result, after one round where both players choose to continue,  \emph{they both know they are not in an SPE} --- and all bets are off!
Thus it seems to us that a comprehensive analysis of escalation should use a refined model with an explicit representation of the beliefs of the players, as in Harsanyi type spaces \cite{harsanyi_games_1967} --- which can also be modelled coalgebraically \cite{moss_harsanyi_2004}.

\section{Further Directions}

The present paper contains what should be regarded as some very preliminary results, presented in a manner which is mainly aimed at computer scientists and mathematicians rather than economists and game theorists. Nevertheless, in our view the general idea of applying coalgebraic and other structural methods which have been developed in computer science to economics and game theory is promising, and deserves further study and development. 
In particular, many topics in economics which refer to infinite horizons and reflexivity seem tailor-made for the use of coalgebraic methods. At the same time, they can suggest new challenges and technical directions for coalgebra.

\bibliographystyle{plain}
\bibliography{gbib}
\end{document}